\documentclass[twocolumn]{article}
\usepackage{cite}
\usepackage{amsmath}
\usepackage[margin=0.7in]{geometry}
\usepackage{amsthm}
\usepackage{amsfonts}
\usepackage{graphicx}
\usepackage{epstopdf}
\usepackage[usenames,dvipsnames]{pstricks}
\usepackage{epsfig}
\usepackage{caption}
\usepackage{subcaption}
\usepackage{authblk}
\usepackage{afterpage}
\usepackage{perpage}
\usepackage{xcolor}
\usepackage{multicol}
\usepackage{listings}
\usepackage{braket}
\usepackage[switch]{lineno}
\usepackage{enumerate}
\usepackage{mathtools}
\makeatletter
\let\MYcaption\@makecaption
\makeatother

\MakePerPage{footnote}
\usepackage{amssymb}
\usepackage[font=footnotesize]{subcaption}
\newtheorem{thm}{Theorem}
\newtheorem{corollary}{Corollary}

\makeatletter
\let\@makecaption\MYcaption
\makeatother
\usepackage{url}
\usepackage{textcomp}
\usepackage{multirow}

\title{Quantum outage probability for time-varying quantum channels}
\author{Josu Etxezarreta Martinez$^{1,*}$, Patricio Fuentes$^{1}$, Pedro Crespo$^{1}$ and Javier Garcia-Frias$^{2}$}
\affil{\small{$^1$Department of Basic Sciences, Tecnun - University of Navarra, 20018 San Sebastian, Spain}}
\affil{\small{$^2$Department of Electrical and Computer Engineering, University of Delaware, Newark, DE 19716 USA}}
\affil[]{$^*$\small{\textit{jetxezarreta@tecnun.es}}}
\date{\today}

\begin{document}
\twocolumn[
  \begin{@twocolumnfalse}
\maketitle
\begin{abstract} 
Recent experimental studies have shown that the relaxation time, $T_1$, and the dephasing time, $T_2$, of superconducting qubits fluctuate considerably over time. Time-varying quantum channel (TVQC) models have been proposed in order to consider the time varying nature of the parameters that define qubit decoherence. This dynamic nature of quantum channels causes a degradation of the performance of quantum error correction codes (QECC) that is portrayed as a flattening of their error rate curves. In this article, we introduce the concepts of quantum outage probability and quantum hashing outage probability as asymptotically achievable error rates by a QECC with quantum rate $R_Q$ operating over a TVQC. We derive closed-form expressions for the family of time-varying amplitude damping channels (TVAD) and study their behaviour for different scenarios. We quantify the impact of time-variation as a function of the relative variation of $T_1$ around its mean. We conclude that the performance of QECCs is limited in many cases by the inherent fluctuations of their decoherence parameters and corroborate that parameter stability is crucial to maintain the excellent performance observed over static quantum channels.
\end{abstract}
\hfill
\end{@twocolumnfalse}
]

\section{Introduction}
The proneness of quantum information to errors puts in jeopardy the astonishing potential of quantum technologies to solve computational problems that cannot be efficiently processed by classical machines \cite{grover,shor,drugs}. Quantum errors arise due to the loss of coherence experienced by quantum states as a consequence of their interaction with the surrounding environment \cite{josuchannels}. This phenomenon is known as environmental decoherence. Quantum error correction codes were conceived as methods to protect quantum information from the deleterious effects of decoherence. Such strategies are of paramount importance to fulfil the potential of quantum technologies. In consequence, the quantum information community has gone above and beyond in its pursuit of QECCs that exhibit excellent performance and are capable of reversing quantum errors while consuming the least possible amount of resources. Several promising families of QECCs such as Quantum Reed-Muller codes \cite{QRM}, Quantum Low Density Parity Check (QLDPC) codes \cite{bicycle}, Quantum Low Density Generator Matrix (QLDGM) codes \cite{jgf,patrick,patrick2,patrick3}, Quantum Convolutional Codes (QCC) \cite{QCC}, Quantum Turbo Codes (QTC) \cite{QTC,EAQTC,josu1,josu2,josu3}, and Quantum Topological Codes \cite{toric,QEClidar} have been constructed following this premise.

Accurate mathematical modelling of decoherence effects is invaluable to construct QECCs that work in realistic scenarios. Abstractions that represent the effects of decoherence on quantum information are known as quantum channels. In the context of density matrices, quantum channels are completely-positive, trace-preserving (CPTP) linear maps between spaces of operators \cite{josuchannels}. Generally, these transformations are described via the Choi-Kraus representation as a set of matrices known as Kraus or error operators. Quantum noise models that describe decoherence effects experienced by two-level systems (qubits) in a fairly complete manner depend on the so-called relaxation time, $T_1$, and on the dephasing time, $T_2$, \cite{josuchannels}. $T_1$ and $T_2$ are experimentally measurable parameters that provide a nexus between the actual qubits that will be constructed for a quantum processor and the theoretical models that are used to describe how these qubits behave. Previous literature on QECC assumes that $T_1$ and $T_2$ are fixed parameters, implying that the quantum channels used for noise modelling are static and that their behaviour does not change over time \cite{josuchannels,QRM,bicycle,jgf,patrick,patrick2,patrick3,QCC,QTC,EAQTC,josu1,josu2,josu3,toric,QEClidar}.

Recent experimental studies on superconducting qubits have shown that $T_1$ and $T_2$ are time-variant \cite{decoherenceBenchmarking,fluctAPS,fluctApp}. These results have led to the proposal of the framework of time-varying quantum channels \cite{TVQC} as quantum channel models that fluctuate from time-to-time. This time-varying channel paradigm stands in contrast to the static approach that has been assumed previously. The TVQC model \cite{TVQC} is relevant given its consideration of the dynamic behaviour of experimentally measured decoherence parameters. Furthermore, it was shown in \cite{TVQC} that the excellent performance achieved by the QECCs proposed in the literature is compromised when the time-variations considered for decoherence modelling are significant and the QECCs have a steep error correction curve. This last result is embodied by a flattening of the steep error rate curve (as a function of the noise level) of those QECCs. In consequence, since the TVQC portrays a more realistic mathematical abstraction of the quantum noise suffered by superconducting qubits when time variations of decoherence parameters exist, the real error rate of QECCs should present this flattening effect.

In this article, we study the asymptotical limits of error correction for the paradigm of time-varying quantum channel models. Motivated by the similarity between the TVQCs and classical slow or block fading scenarios, i.e., when the channel remains constant over the duration of the coded block \cite{TVQC,tse}; we define the quantum outage probability of a TVQC as the asymptotically achievable error rate for a QECC with quantum rate $R_\mathrm{Q}$ that operates over the aforementioned noise model. Additionally, we also introduce the concept of the quantum hashing outage probability to provide an upper bound on the asymptotically achievable error rate for TVQC channels whose quantum capacity (of their static counterparts) is unknown, but for which a lower bound known as the hashing limit exists. Based on the experimentally determined statistical distribution of $T_1$ \cite{TVQC,decoherenceBenchmarking}, we provide closed-form expressions for the known TVQCs: time-varying amplitude damping channel (TVAD), time-varying amplitude damping Pauli twirl approximated channel (TVADPTA) and time-varying amplitude damping Clifford twirl approximated channel (TVADCTA). We analyze the quantum outage probability and quantum hashing outage probabilities of the aforementioned TVQCs for different scenarios. Finally, QTCs operating over the considered channels are numerically studied and benchmarked using the derived information theoretic limits.

\section{Time-varying quantum channels}\label{sec:TVQC}

The time-varying quantum channel model \cite{TVQC} has been recently proposed with the purpose of including the time fluctuations that are inherent in the decoherence parameters of superconducting qubits  \cite{decoherenceBenchmarking,fluctAPS,fluctApp}. In \cite{TVQC}, several superconducting qubit scenarios were considered depending on the influence of $T_1$ and $T_2$ on the decoherence effects experienced by these qubits. The time-varying amplitude damping (TVAD) channel was proposed for qubits whose pure dephasing rates are negligible ($T_1$-limited) and the time-varying combined amplitude and phase damping channels (TVAPD) were proposed for qubits that have pure dephasing channels that require attention ($T_1\approx T_2$ and $T_2$-dominated scenarios). In this work we will focus on the asymptotical limits for the TVAD channel.

The experimental analysis presented in \cite{decoherenceBenchmarking,TVQC} shows that $T_1(t,\omega)$ can be modeled by a WSS random process of mean $\mu_{T_1}$ and standard deviation $\sigma_{T_1}$ with a stochastic coherence time, $T_\mathrm{c}$, which is in the order of minutes. Since the processing times for quantum algorithms and error correction rounds, $t_{\mathrm{algo}}$, are in the order of microseconds \cite{TVQC}, $t_{\mathrm{algo}}\ll T_\mathrm{c}$, it is reasonable to assume that the process $T_1(t,\omega)$ remains constant during the execution of the algorithm. In other words, $T_1(\omega,t)$ can be modeled as a random variable ($t=0$ has been selected without loss of generality due to the fact that the process is WS stationary.) $T_1(\omega) = T_1(t, \omega)\rvert_{t=0},\forall t\in[0,T],T<<T_\mathrm{c}$. Given that the random process $T_1(t,\omega)$ is assumed to be Gaussian, the random variable $T_1(\omega)$ will also be Gaussian with distribution $\mathcal{N}(\mu_{T_1},\sigma^2_{T_1})$. However, since any realization of $T_1(\omega)$ should always be positive, $T_1$ must be modeled as a truncated Gaussian random variable in the region $[0,\infty]$. Therefore, the probability density function of $T_1(\omega)$ is modeled as

\begin{equation}\label{eq:pdf}
f_{T_1}(t_1)\begin{cases}
\frac{1}{\sigma_{T_1}\sqrt{2\pi}}\frac{\mathrm{e}^{-\frac{(t_1 - \mu_{T_1})^2}{2\sigma_{T_1}^2}}}{1-\mathrm{Q}\left(\frac{\mu_{T_1}}{\sigma_{T_1}}\right)} &\text{ if } t_1\geq 0 \\
0 &\text{ if } t_1<0
\end{cases},
\end{equation}
where in the above expression, $\mathrm{Q}(\cdot)$ is the Q-function defined as
\begin{equation}\label{josu6}
\mathrm{Q}(x) = \frac{1}{\sqrt{2\pi}}\int_x^{\infty}\mathrm{e}^{-\frac{x^2}{2}} dx.
\end{equation}

\subsection{Time-varying amplitude damping channels}\label{sub:TVAD}
The amplitude damping channel is a fairly complete model for describing the decoherence effects suffered by superconducting qubits \cite{josuchannels}. To be more specific, it accurately models the quantum noise experienced by qubits that are said to be $T_1$-limited. In consequence, the time-varying amplitude damping (TVAD) channel was proposed in \cite{TVQC}.
The Kraus operators for the TVAD channel for $T_1$-limited superconducting qubits are given by
\begin{equation}\label{josu7}
 E_0=\left( \begin{array}{cc}
1 & 0\\
0 & \sqrt{1-\gamma(t,\omega)} \end{array} \right),\;\;E_1=\left( \begin{array}{cc}
0 & \sqrt{\gamma(t,\omega)}\\
0 & 0 \end{array} \right),
\end{equation}
where the damping parameter WSS random process, $\gamma(t,\omega)$, is related to the relaxation WSS random process, $T_1(t,\omega)$, as

\begin{equation}\label{josu4}
\gamma(t,\omega)=1-e^{-\frac{t_{algo}}{T_1(t,\omega)}}.
\end{equation}

\subsection{Time-varying twirl approximated channels}\label{sub:TVTA}
As comprehensive as amplitude damping channels are, they cannot be efficiently implemented in classical computers when the number of qubits starts to grow. This limitation made the research community consider the use of approximated channels that are efficiently implementable in the classical domain and that maintain enough information about quantum noise. Twirling is an extensively used method in quantum information theory to study the average effect of general quantum noise
models via their mapping to more symmetric versions of themselves \cite{PTAtwirl,CTAtwirl}. Moreover, it is known that any correctable code for a twirled channel is a correctable code for the original channel up to an additional unitary correction \cite{josuchannels}.

As a consequence, Pauli twirl approximated channels (PTA) and Clifford twirl approximated channels (CTA) have been widely used in the context of quantum error correction \cite{josuchannels,PTAtwirl,CTAtwirl}. These approximated channels belong to the family of Pauli channels. Since they fulfill the Gottesman-Knill theorem \cite{NielsenChuang}, they can be simulated appropriately on classical machines \cite{josuchannels}.

Since the TVAD channel model is fairly successful in describing $T_1$-limited superconducting qubits \cite{gamma,bylander}, the time-varying amplitude damping Pauli twirl approximated channel (TVADPTA) and the time-varying amplitude damping Clifford twirl approximated channel (TVADCTA) were proposed in \cite{TVQC}. More precisely, when Pauli twirling a TVAD channel with Kraus operators in \eqref{josu7}, the resulting TVADPTA has the following Kraus operators:
\begin{equation}\label{josu8}
\{\sqrt{p_\mathrm{I}(\gamma)}\mathrm{I},\sqrt{p_\mathrm{x}(\gamma)}\mathrm{X},\sqrt{p_\mathrm{y}(\gamma)}\mathrm{Y},\sqrt{p_\mathrm{z}(\gamma)}\mathrm{Z}\},
\end{equation}
with $\sum_{k\in\{\mathrm{I,x,y,z}\}}p_k(\gamma)=1$,
and
\begin{equation}\label{josu9}
p_\mathrm{x}(\gamma)=p_\mathrm{y}(\gamma)=\frac{\gamma(t,\omega)}{4}\;\; \mbox{and $p_\mathrm{z}(\gamma)=\left(\frac{1-\sqrt{1-\gamma(t,\omega)}}{2}\right)^2$}.
\end{equation}

Notice that the TVADPTA exhibits some degree of asymmetry (Asymmetry refers to the fact that there is a mismatch between errors of type $\mathrm{Z}$ and errors of type $\mathrm{X}$ and $\mathrm{Y}$. Asymmetry is quantified by the so-called asymmetry parameter $\alpha = p_\mathrm{z}/p_\mathrm{x}$ \cite{josu3,patrick3,PTAtwirl}).

On the other hand, when Clifford twirling a TVAD channel, the resulting Kraus operators for the TVADCTA are those defined in \eqref{josu8}, with
\begin{equation}\label{josu10}
\mbox{ $p_\mathrm{I}(\gamma)=\left(\frac{1+\sqrt{1-\gamma(t,\omega)}}{2}\right)^2$, and $p_k(\gamma)=\frac{1-p_\mathrm{I}(\gamma)}{3}$},
\end{equation}
where $k\in\{\mathrm{x,y,z}\}$.
Note that the TVADCTA channels belong to the sub-family of Pauli channels known as depolarizing channels, since the additional symplectic twirl performed on the Pauli twirl in order to obtain the Clifford twirl symmetrizes the error distribution, which results in $p_\mathrm{x}=p_\mathrm{y}=p_\mathrm{z}$ \cite{josuchannels,CTAtwirl}.

We denote as $\mathbf{p}_{\mbox{\tiny{ADPTA}}}(\gamma)$ and $\mathbf{p}_{\mbox{\tiny{ADCTA}}}(\gamma)$ the probability mass functions for the ADPTA and ADCTA channels defined by \eqref{josu9} and \eqref{josu10}, respectively.

Note that since the Kraus operators of all the discussed quantum channels are a function of the relaxation time stochastic process, $T_1(t,\omega)$, they will be constant for the coherence time and are obtained by the realizations of the probability distribution in \eqref{eq:pdf}.

\section{Quantum Capacity}\label{sec:Qcap}
The quantum capacity is the maximum rate at which quantum information can be communicated/corrected over many independent uses of a noisy quantum channel. In other words, the concept of the quantum capacity establishes the quantum rate limit for which reliable (i.e., with a vanishing error rate) quantum communication/correction is asymptotically possible. The definition of the quantum capacity, often referred to as the Lloyd-Shor-Devetak (LSD) capacity, is given by the following theorem \cite{wildeQIT,quantumcap}.
\begin{thm}[LSD capacity]\label{thm:quantumcap}
\textit{The quantum capacity $C_\mathrm{Q}(\mathcal{N})$ of a quantum channel $\mathcal{N}$ is equal to the regularized coherent information of the channel
\begin{equation}\label{eq:cqreg}
C_\mathrm{Q}(\mathcal{N}) = Q_{\mathrm{reg}}(\mathcal{N}),
\end{equation}
where
\begin{equation}\label{eq:coherentreg}
Q_{\mathrm{reg}}(\mathcal{N}) = \lim_{n\rightarrow \infty} \frac{1}{n} Q_{\mathrm{coh}}(\mathcal{N}^{\otimes n}).
\end{equation}
The channel coherent information $Q_{\mathrm{coh}}(\mathcal{N})$ is defined as
\begin{equation}\label{eq:coh}
Q_{\mathrm{coh}}(\mathcal{N}) = \max_{\rho} (S(\mathcal{N}(\rho)) - S(\rho_\mathrm{E})),
\end{equation}
where $S$ is the von Neumann entropy and $S(\rho_\mathrm{E})$ measures how much information the environment has.}
\end{thm}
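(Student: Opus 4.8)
\emph{Sketch of proof (following Lloyd, Shor and Devetak).} The plan is to establish the two matching bounds $C_\mathrm{Q}(\mathcal{N}) \le Q_\mathrm{reg}(\mathcal{N})$ (the converse) and $C_\mathrm{Q}(\mathcal{N}) \ge Q_\mathrm{reg}(\mathcal{N})$ (achievability), with $C_\mathrm{Q}(\mathcal{N})$ read operationally as the supremum of rates $R$ admitting a sequence of $(n,2^{nR},\epsilon_n)$ quantum codes whose entanglement fidelity tends to $1$. It is convenient to write $I_\mathrm{coh}(\mathcal{N},\rho) := S(\mathcal{N}(\rho)) - S(\rho_\mathrm{E})$ for the unmaximized coherent information in \eqref{eq:coh}, so that $Q_\mathrm{coh}(\mathcal{N}) = \max_\rho I_\mathrm{coh}(\mathcal{N},\rho)$. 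A preliminary remark I would record is that $Q_\mathrm{coh}$ is superadditive under tensor products, so by Fekete's lemma the limit in \eqref{eq:coherentreg} exists and equals $\sup_n \tfrac{1}{n}Q_\mathrm{coh}(\mathcal{N}^{\otimes n})$; this is what forces the two bounds to meet.

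For the converse, take an arbitrary rate-$R$ code that feeds half of a maximally entangled state on a reference system into $\mathcal{N}^{\otimes n}$ and is decoded with entanglement fidelity $\ge 1-\epsilon_n$. Applying the quantum data-processing inequality to the joint state of the reference, the channel output, and the purifying environment yields $nR \le I_\mathrm{coh}(\mathcal{N}^{\otimes n},\rho) + c(n,\epsilon_n)$ for the code's input $\rho$, where $c(n,\epsilon_n)$ is controlled by a dimension-dependent continuity estimate for the conditional entropy (the Alicki--Fannes, or ``quantum Fano'', bound) and is $o(n)$ once $\epsilon_n\to 0$. Maximizing over $\rho$, dividing by $n$, and letting $n\to\infty$ gives $R \le Q_\mathrm{reg}(\mathcal{N})$.

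The achievability direction is the substantive one, and I would run the decoupling argument. Fix a blocklength $m$ and an input state $\rho$ on $m$ channel inputs with $I_\mathrm{coh}(\mathcal{N}^{\otimes m},\rho)$ within $\delta$ of $Q_\mathrm{coh}(\mathcal{N}^{\otimes m})$. For large $k$, send $\rho^{\otimes k}$ through $\mathcal{N}^{\otimes mk}$ and take as code space a Haar-random subspace of the $k$-fold typical subspace of $\rho$ of dimension $2^{k(I_\mathrm{coh}(\mathcal{N}^{\otimes m},\rho)-\delta')}$. The decoupling inequality shows that, averaged over the random subspace, the environment's reduced state on the code space is exponentially close to a fixed product state uncorrelated with the reference whenever the rate lies below $I_\mathrm{coh}$; by Uhlmann's theorem this decoupling is equivalent to the existence of a recovery map with fidelity $\to 1$, and an expurgation step then extracts a single deterministic code of rate $\tfrac{1}{m}I_\mathrm{coh}(\mathcal{N}^{\otimes m},\rho)-\delta''$. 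Letting $k\to\infty$, then $\delta,\delta',\delta''\to 0$, and finally optimizing over $\rho$ and $m$ gives $C_\mathrm{Q}(\mathcal{N}) \ge \sup_m \tfrac{1}{m}Q_\mathrm{coh}(\mathcal{N}^{\otimes m}) = Q_\mathrm{reg}(\mathcal{N})$.

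The main obstacle is the achievability proof, and inside it the measure-concentration step: one has to make the Haar-random projection estimate rigorous, bound the mismatch between the genuine channel output and its typical-subspace approximation, and come to terms with the fact that because $Q_\mathrm{coh}$ is not additive there is no single-letter formula, so the regularization in \eqref{eq:coherentreg} cannot be removed in general. (This non-additivity is precisely the point that Lloyd's original heuristic glossed over and that Shor and Devetak later supplied.) By comparison the converse ingredients --- Fekete's lemma, data processing, and the entropy continuity bounds --- are routine.
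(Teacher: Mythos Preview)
The paper does not prove Theorem~\ref{thm:quantumcap}; it is simply quoted as the Lloyd--Shor--Devetak theorem with citations to Wilde's textbook and the survey of Gyongyosi~\emph{et al.}, and then used as background for the subsequent outage-probability calculations. There is therefore no proof in the paper to compare your proposal against.

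That said, your sketch is a faithful outline of the standard modern treatment (essentially the one in the cited reference~\cite{wildeQIT}): superadditivity of $Q_{\mathrm{coh}}$ plus Fekete's lemma for the existence of the regularized limit, data processing together with an Alicki--Fannes-type continuity bound for the converse, and the decoupling/random-subspace argument with Uhlmann's theorem for achievability. One small historical inaccuracy: the gap in Lloyd's original argument was not really about non-additivity per se but about the rigor of the random-coding/typicality step; Shor and then Devetak supplied the missing analysis, while the non-additivity of coherent information (and the consequent need for regularization) is a separate, later story. This does not affect the mathematics of your sketch, which is sound and already goes well beyond what the paper itself does with this statement.
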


For general channels, there is no closed-form analytical expression of the quantum capacity given in theorem \ref{thm:quantumcap}. However, the AD channel and its twirl approximations have either closed-form expressions or bounds for their LSD capacities.

\subsection{Static amplitude damping channel}\label{sub:CqAD}
The quantum capacity of an AD channel with damping parameter $\gamma \in [0,1]$ is equal to \cite{wildeQIT,quantumcap}
\begin{equation}\label{eq:ADcap}
C_\mathrm{Q}(\gamma) = \max_{\xi\in[0,1]}  \mathrm{H}_2((1-\gamma) \xi) - \mathrm{H}_2(\gamma \xi),
\end{equation}
whenever $\gamma \in [0, 1/2]$, and zero for $\gamma\in [1/2,1]$. $\mathrm{H}_2(x)$ is the binary entropy.

\subsection{Static Pauli channels}\label{sub:CqP}
An expression for the quantum capacity of the widely used Pauli channels remains unknown \cite{josuchannels,wildeQIT}. However, a lower bound that can be achieved by stabilizer codes, the hashing bound, $C_\mathrm{H}$, \cite{wildeQIT} is known. The reason why the quantum capacity of a Pauli channel can be higher than the hashing bound, i.e. $C_\mathrm{Q} \geq C_\mathrm{H}$, is the degenerate nature of quantum codes \cite{degenPRL,degen}, which arises from the fact that several distinct channel errors affect quantum states in an indistinguishable manner.

The hashing bound for a Pauli channel defined by the probability mass function $\mathbf{p} = (p_\mathrm{I},p_\mathrm{x},p_\mathrm{y},p_\mathrm{z})$ is given by \cite{wildeQIT}
\begin{equation}\label{eq:hash}
C_\mathrm{H}(\mathbf{p}) = 1 - \mathrm{H}_2(\mathbf{p}).
\end{equation}
$\mathrm{H}_2(\mathbf{p}) = -\sum_j p_j\log_2(p_j)$ is the entropy in bits of a discrete random variable with probability mass function given by $\mathbf{p}$.

Equation \eqref{eq:hash} speaks towards the general hashing bound of the whole family of Pauli channels. Sub-families of Pauli channels of special interest are the ones obtained by $\mathcal{P}_n$- and $\mathcal{C}_1^{\otimes n}$-twirling the AD channel \cite{josuchannels,TVQC,PTAtwirl,CTAtwirl}. The families of Pauli channels obtained by such operations are denominated the AD Pauli twirl approximated (ADPTA) channel and the AD Clifford twirl approximated (ADCTA) channel, which is a depolarizing channel (since $p_x=p_y=p_z$). The parameters $p_x,p_y,p_z$ are themselves functions of the relaxation time, $T_1$, due to the fact that they are approximated by the $T_1$-dependent AD channel.

\section{Quantum outage probability}\label{sub:quantumpout}
The TVQC model proposed in \cite{TVQC} clearly resembles the paradigm of classical block fading \cite{tse}. This occurs because the stochastic processes that define the dynamic behaviour of the relaxation and dephasing times are considered to be constant for the codeword length, since the processing times are much smaller than the coherence times of such processes \cite{TVQC,decoherenceBenchmarking}. Also, it is considered that all qubits in the codeword are affected equally by the noise \cite{TVQC}. This is reminiscent of classical block fading scenarios in which the channel gain, $h$, is considered to be constant for the codeword length \cite{tse}. We use this connection with the classical domain to develop the information theoretical concepts for TVQCs.

\subsection{Classical outage probability for the block fading additive Gaussian noise channel}\label{subsub:classicalpout}

Under slow fading conditions, the channel gain of an Aditive White Gaussian Noise (AWGN) channel, which is generally modeled as a wide-sense stationary  (WSS) random process $\alpha(t,\omega)$, varies slowly with respect to the time duration of a codeword. In these situations, the value of the channel gain during the transmission of a codeword can be considered to be approximately constant and given by a realization of the random variable $\alpha(\omega)$. Therefore, the block fading channel can be reduced to an AWGN channel where the received signal-to-noise ratio is a random variable $|\alpha(\omega)|^2 \mathrm{SNR}$. Consequently, the channel capacity also becomes the random variable $C(\omega) = \log_2(1 + |\alpha(\omega)|^2 \mathrm{SNR})$ with bits per channel use serving as the measuring units.
Note that by the Shannon channel coding theorem, given an encoding rate $R$ bits per channel use, reliable communication will be possible if the realization of the channel capacity $C(\omega)$ is larger than $R$. On the other hand, when $C<R$ communication with low probability of error is not possible. The probability that communications fail when transmitting a codeword with rate $R$ is called outage probability and is given by \cite{tse}
\begin{equation}\label{eq:poutFading}
\begin{split}
p_{\mathrm{out}}(R,\mathrm{SNR}) &= \mathrm{P}(\{\omega\in\Omega : C(\omega) < R\}) \\  &= P(\{\omega\in\Omega : \log_2(1+|\alpha(\omega)|^2 \mathrm{SNR}) < R\}).
\end{split}
\end{equation}

The outage probability will depend on the probability distribution of the channel gain random variable, $\alpha(\omega)$. For the widely used Rayleigh fading model, for which the channel gain follows a circularly symmetric complex normal distribution, $\mathcal{CN}(0,1)$, the outage probability can be shown to be equal to \cite{tse}
\begin{equation}\label{eq:rayleighPout}
p_{\mathrm{out}}(R,\mathrm{SNR}) = 1 - \mathrm{e}^{\frac{-(2^R - 1)}{\mathrm{SNR}}}.
\end{equation}

\subsection{Quantum outage probability}\label{sub:poutQdef}
Based on the similarity with the classical block fading scenario, we can assume that each of the realizations of the qubit relaxation and dephasing times, $T_1$ and $T_2$, will result in a realization of the time-varying quantum channel in consideration, and consequently, in a specific value for the quantum channel capacity, $C_\mathrm{Q}$ qubits per channel use. Similar to classical coding, if the realization of the decoherence parameters leads to a channel capacity lower that the quantum coding rate, $R_\mathrm{Q}$ qubits per channel use, then the quantum bit error rate (QBER) will not vanish asymptotically with the blocklength, independently of the selected QECC. Thus, we can state that for such realizations the channel will be in outage. Therefore, we define
\begin{equation}\label{eq:qPout}
p_{\mathrm{out}}^\mathrm{Q}(R_\mathrm{Q}) = \mathrm{P}(\{\omega\in\Omega : C_\mathrm{Q}(\omega) < R_\mathrm{Q}\}),
\end{equation}
as the quantum outage probability.

In other words, with probability $p_{\mathrm{out}}^\mathrm{Q}(R_\mathrm{Q})$, the capacity of the channel $C_\mathrm{Q}(\omega)$ will be lower than the rate of the code, and thus, the error rate will not vanish asymptotically. Conversely, with probability $1-p_{\mathrm{out}}^\mathrm{Q}(R_\mathrm{Q})$, reliable quantum correction will be possible. Thus, the quantum outage probability will be the asymptotically achievable error rate for quantum error correction when the rate is $R_\mathrm{Q}$.

\section{Computation of the quantum outage probability for the family of time-varying amplitude damping channels}
Next, we derive the quantum outage probability for the family of TVAD channels in Theorem \ref{thm:poutTVAD} \cite{TVQC} and we provide a closed-form expression for this quantity when the TVAD is considered. In addition, we define the quantum hashing outage probability as a bound of $p_{\mathrm{out}}^\mathrm{Q}$ for the twirl aproximated TVAD Pauli channels, as their exact LSD capacities are not known.

\subsection{Outage probability for the time-varying amplitude damping channel}\label{sec:poutTVAD}

It is important to define a set of specific concepts before Theorem \ref{thm:poutTVAD} is introduced. It is clear from expression \eqref{eq:ADcap}, that the quantum capacity $C_\mathrm{Q}$ of the AD channel is a monotonically decreasing function of the damping parameter, $\gamma$. Therefore, there will be a unique $\gamma^*(R_\mathrm{Q})$ that makes the value of the channel capacity $C_\mathrm{Q}$ equal to $R_\mathrm{Q}$, i.e., $C_\mathrm{Q}(\gamma^*(R_\mathrm{Q})) = R_\mathrm{Q}$. That is to say,
\begin{equation}\label{josu2}
C_\mathrm{Q}(\gamma^*(R_\mathrm{Q})) = R_\mathrm{Q} \Leftrightarrow\gamma^*(R_\mathrm{Q})=C^{-1}_\mathrm{Q}(R_\mathrm{Q}).
\end{equation}
We will refer to $\gamma^*(R_\mathrm{Q})$ as the noise limit. Notice that codes with rates $R_\mathrm{Q}$ cannot operate reliably for channels noisier than the noise limit, where by noisier we mean that the channel has a higher value of the damping parameter, $\gamma$ (note that $\gamma$ describes how intense the amplitude damping effects are).

Additionally, from \eqref{josu4} we define the critical relaxation time $T_1^*(R_\mathrm{Q}, t_{\mathrm{algo}})$ as
\begin{equation}\label{eq:citicalT1}
T_1^*(R_\mathrm{Q},t_{\mathrm{algo}}) = \frac{-t_{\mathrm{algo}}}{\ln{(1-\gamma^*(R_\mathrm{Q}))}},
\end{equation}
which is a function of the algorithm time $t_{\mathrm{algo}}$. In order to perform accurate comparisons of quantum channels with different mean relaxation times, $\mu_{T_1}$, we rewrite the critical time as a function of the damping parameter, $\gamma$, that the AD channel exhibits when its static version is considered (This is similar to the normalization done in \cite{TVQC}, as the damping rate is a function of the algorithm time and the relaxation time). Note that if the calculations were done as a function of the algorithm time, the comparison between qubits with different mean relaxation times would not be ideal since for a fixed $t_{algo}$, higher values of $\mu_{T_1}$ result in lower values of $\gamma $ and thus, lower channel noise. It is obvious that longer mean relaxation times are more favorable for computing applications, as they allow for longer algorithm times. However, we are interested in calculating the quantum outage probability versus the noise level of the channel, i.e, we want to know how much noise a qubit is able to tolerate. Consequently, we obtain the time that the quantum algorithm would require to reach a noise level $\gamma$ for the static AD channel as \cite{josuchannels}
\begin{equation}\label{eq:talgo}
t_{\mathrm{algo}} = - \mu_{T_1}\ln{(1 - \gamma)}.
\end{equation}

This way, the critical relaxation time in \eqref{eq:citicalT1} is a function of the damping parameter $\gamma$ associated to the static AD channel as
\begin{equation}\label{josu12}
T^*_{1}(R_\mathrm{Q},\gamma)=\frac{\mu_{T_1}\ln(1-{\gamma})}{\ln(1-{\gamma^*(R_\mathrm{Q})})}.
\end{equation}

Finally, the coefficient of variation, $c_\mathrm{v}$, was shown in \cite{TVQC} to be the most relevant parameter to describe how much the variations of $T_1$ influence the error correcting performance of QECCs. The coefficient of variation of a random variable is a standardized measure of dispersion of a probability distribution and it is defined as
\begin{equation}\label{eq:CVar}
c_\mathrm{v} = \frac{\sigma}{\mu},
\end{equation}
where $\sigma$ is the standard deviation of the random variable and $\mu$ is its mean. This parameter measures the extent to which realizations of a random variable can deviate from its mean.

At this point, we are ready to introduce the theorem that provides the quantum outage probability for TVAD channels that consider the qubits of \cite{decoherenceBenchmarking,TVQC}.
\begin{thm}[TVAD quantum outage probability]\label{thm:poutTVAD}
\textit{The quantum outage probability for the time-varying amplitude damping channels associated to the damping parameter,  $\gamma\in[0,1-{e}^{-1}]$, is equal to
\begin{equation}\label{eq:poutTVAD}
p_{\mathrm{out}}^\mathrm{Q}(R_\mathrm{Q}, \gamma) = 1 - \frac{\mathrm{Q}\left(\frac{1}{c_\mathrm{v}(T_1)}\left[\frac{\ln(1-{\gamma})}{\ln(1-{\gamma^*(R_\mathrm{Q})})} - 1\right]\right)}{1 - \mathrm{Q}\left(\frac{1}{c_\mathrm{v}(T_1)}\right)},
\end{equation}
where $\mathrm{Q}(\cdot)$ is the Q-function, $c_\mathrm{v}(T_1)$ is the coefficient of variation of $T_1$ \eqref{eq:CVar}, $\gamma^*(R_\mathrm{Q})$ is the noise limit, $\mu_{T_1}$ is the mean relaxation time and $\sigma_{T_1}$ is the standard deviation of the relaxation time.}
\end{thm}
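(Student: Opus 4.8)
The plan is to reduce the event $\{C_\mathrm{Q}(\omega) < R_\mathrm{Q}\}$ appearing in the definition \eqref{eq:qPout} of the quantum outage probability to an event involving only the relaxation time random variable $T_1(\omega)$, and then to integrate the truncated Gaussian density \eqref{eq:pdf} over the resulting region.

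First I would exploit monotonicity. By \eqref{eq:ADcap} the AD quantum capacity $C_\mathrm{Q}(\gamma)$ is a decreasing function of $\gamma$, so the noise limit $\gamma^*(R_\mathrm{Q})$ defined in \eqref{josu2} is the unique value for which $C_\mathrm{Q}(\gamma(\omega)) < R_\mathrm{Q}$ holds precisely when $\gamma(\omega) > \gamma^*(R_\mathrm{Q})$. Since, by \eqref{josu4}, the damping parameter $\gamma(\omega) = 1 - \mathrm{e}^{-t_\mathrm{algo}/T_1(\omega)}$ is itself a strictly decreasing function of $T_1(\omega) > 0$, composing the two monotonicities shows that the outage event is equivalent to $T_1(\omega) < T_1^*(R_\mathrm{Q}, t_\mathrm{algo})$, with $T_1^*$ the critical relaxation time of \eqref{eq:citicalT1}. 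Hence $p_\mathrm{out}^\mathrm{Q}(R_\mathrm{Q}) = \mathrm{P}(\{\omega : T_1(\omega) < T_1^*\}) = \int_0^{T_1^*} f_{T_1}(t_1)\,\mathrm{d}t_1$.

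Next I would evaluate this integral against \eqref{eq:pdf}. Writing the Gaussian partial integral in terms of the Q-function \eqref{josu6} and applying $\mathrm{Q}(-x) = 1 - \mathrm{Q}(x)$ at the lower limit $t_1 = 0$, the numerator becomes $1 - \mathrm{Q}(\mu_{T_1}/\sigma_{T_1}) - \mathrm{Q}\!\left((T_1^* - \mu_{T_1})/\sigma_{T_1}\right)$; dividing by the truncation constant $1 - \mathrm{Q}(\mu_{T_1}/\sigma_{T_1})$ already yields $p_\mathrm{out}^\mathrm{Q} = 1 - \mathrm{Q}\!\left((T_1^* - \mu_{T_1})/\sigma_{T_1}\right)/\left(1 - \mathrm{Q}(\mu_{T_1}/\sigma_{T_1})\right)$. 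To reach the stated closed form I would then insert the normalized critical relaxation time \eqref{josu12}, which follows from \eqref{eq:citicalT1} after the reparametrization $t_\mathrm{algo} = -\mu_{T_1}\ln(1-\gamma)$ of \eqref{eq:talgo} that rewrites everything against the static-channel noise level $\gamma$; this gives $(T_1^* - \mu_{T_1})/\sigma_{T_1} = (\mu_{T_1}/\sigma_{T_1})\big[\ln(1-\gamma)/\ln(1-\gamma^*(R_\mathrm{Q})) - 1\big]$, and identifying $\mu_{T_1}/\sigma_{T_1} = 1/c_\mathrm{v}(T_1)$ from \eqref{eq:CVar} produces \eqref{eq:poutTVAD}.

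There is no deep obstacle here: the argument is a direct computation. The only place demanding attention is the chain of monotonicity reductions — $C_\mathrm{Q}$ decreasing in $\gamma$ and $\gamma$ decreasing in $T_1$ — where one must track the two inequality reversals correctly so that $\{C_\mathrm{Q}(\omega) < R_\mathrm{Q}\}$ maps to $\{T_1(\omega) < T_1^*\}$ rather than its complement, together with the careful handling of the one-sided $[0,\infty)$ truncation of the Gaussian. The hypothesis $\gamma \in [0, 1 - \mathrm{e}^{-1}]$ merely delimits the operationally relevant regime $t_\mathrm{algo} \le \mu_{T_1}$ and is not actually invoked in the algebra.
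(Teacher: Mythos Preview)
Your proposal is correct and follows essentially the same route as the paper: reduce the outage event to $\{T_1(\omega) < T_1^*\}$ via the two monotonicities, integrate the truncated Gaussian density of \eqref{eq:pdf} to obtain the Q-function expression, and then substitute \eqref{josu12} and $c_\mathrm{v}(T_1)=\sigma_{T_1}/\mu_{T_1}$ to reach \eqref{eq:poutTVAD}. Your tracking of the inequality directions is in fact cleaner than the paper's, and your remark that the restriction $\gamma\in[0,1-\mathrm{e}^{-1}]$ is an operational range (corresponding to $t_{\mathrm{algo}}\le\mu_{T_1}$) rather than an algebraic necessity matches the paper's concluding paragraph of the proof.
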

\begin{proof}
In order to compute the outage probability $p_{\mathrm{out}}^\mathrm{Q}(R_\mathrm{Q},\gamma)$, we use the fact of the decreasing monotonicity of $C_\mathrm{Q}$ and $T_1$ with respect to $\gamma$. This implies that the events $\{\omega\in\Omega : C_\mathrm{Q}(\gamma(\omega))< R_\mathrm{Q}\}$, $\{ \omega\in\Omega : \gamma(\omega)< \gamma^*(R_\mathrm{Q})\}$ and $\{\omega\in\Omega : T_1(\omega) < T_1^*(R_\mathrm{Q},\gamma)\}$ are all the same.
Therefore,
\begin{equation}\label{eq:TVADstep1}
\begin{split}
p_{\mathrm{out}}^\mathrm{Q}(R_\mathrm{Q}, \gamma) &= \mathrm{P}(\{\omega\in\Omega : C_\mathrm{Q}(\omega) < R_\mathrm{Q}\}) \\&=  \mathrm{P}(\{\omega\in\Omega : T_1(\omega) < T_1^*(R_\mathrm{Q},\gamma)\}).
\end{split}
\end{equation}

Next, we compute \eqref{eq:TVADstep1} based on the fact that the random variable $T_1(\omega)$ is modelled by the probability density function in equation \eqref{eq:pdf} \cite{TVQC}.


The outage probability of the TVAD channel can be calculated as

\begin{equation}\label{eq:TVADstep3}
\begin{split}
p_{\mathrm{out}}^\mathrm{Q}(&R_\mathrm{Q},\gamma) =  \mathrm{P}(\{\omega\in\Omega : T_1(\omega) < T_1^*(R_\mathrm{Q}, \gamma)\}) \\
& = \int_{-\infty}^{T_1^*(R_\mathrm{Q}, \gamma)}f_{T_1}(t_1)dt_1 \\
 &= \int_{0}^{T_1^*(R_\mathrm{Q},\gamma)}\frac{1}{\sigma_{T_1}\sqrt{2\pi}}\frac{\mathrm{e}^{-\frac{(t_1 - \mu_{T_1})^2}{2\sigma_{T_1}^2}}}{1-\mathrm{Q}\left(\frac{\mu_{T_1}}{\sigma_{T_1}}\right)} dt_1 \\
&=\frac{1}{1-\mathrm{Q}\left(\frac{\mu_{T_1}}{\sigma_{T_1}}\right)}\left(\int_{-\infty}^{T_1^*(R_\mathrm{Q}, \gamma)}\frac{1}{\sigma_{T_1}\sqrt{2\pi}}\mathrm{e}^{-\frac{(t_1 - \mu_{T_1})^2}{2\sigma_{T_1}^2}}dt_1\right. \\ & \left.- \int_{-\infty}^{0}\frac{1}{\sigma_{T_1}\sqrt{2\pi}}\mathrm{e}^{-\frac{(t_1 - \mu_{T_1})^2}{2\sigma_{T_1}^2}}dt_1\right)\\
& =\frac{1}{1-\mathrm{Q}\left(\frac{\mu_{T_1}}{\sigma_{T_1}}\right)}\left(\int_{-\infty}^{\frac{T_1^*(R_\mathrm{Q}, \gamma)-\mu_{T_1}}{\sigma_{T_1}}}\frac{1}{\sqrt{2\pi}}\mathrm{e}^{-\frac{\eta^2}{2}}d\eta\right. \\ & \left. - \int_{-\infty}^{\frac{-\mu_{T_1}}{\sigma_{T_1}}}\frac{1}{\sqrt{2\pi}}\mathrm{e}^{-\frac{\eta^2}{2}}d\eta\right)\\
& = \frac{1}{1-\mathrm{Q}\left(\frac{\mu_{T_1}}{\sigma_{T_1}}\right)}\left(1-\int_{\frac{T_1^*(R_\mathrm{Q}, \gamma)-\mu_{T_1}}{\sigma_{T_1}}}^{\infty}\frac{1}{\sqrt{2\pi}}\mathrm{e}^{-\frac{\eta^2}{2}}d\eta\right. \\ & \left. - \int_{\frac{\mu_{T_1}}{\sigma_{T_1}}}^{\infty}\frac{1}{\sqrt{2\pi}}\mathrm{e}^{-\frac{\eta^2}{2}}d\eta\right) \\ 
& =\frac{1-\mathrm{Q}\left(\frac{T_1^*(R_\mathrm{Q}, \gamma)-\mu_{T_1}}{\sigma_{T_1}}\right) - \mathrm{Q}\left(\frac{\mu_{T_1}}{\sigma_{T_1}}\right)}{1-\mathrm{Q}\left(\frac{\mu_{T_1}}{\sigma_{T_1}}\right)} \\ & = 1 - \frac{\mathrm{Q}\left(\frac{T_1^*(R_\mathrm{Q}, \gamma)-\mu_{T_1}}{\sigma_{T_1}}\right)}{1-\mathrm{Q}\left(\frac{\mu_{T_1}}{\sigma_{T_1}}\right)}\\&= 1 - \frac{\mathrm{Q}\left(\frac{\mu_{T_1}}{\sigma_{T_1}}\left[\frac{\ln(1-{\gamma})}{\ln(1-{\gamma^*(R_\mathrm{Q})})} - 1\right]\right)}{1 - \mathrm{Q}\left(\frac{\mu_{T_1}}{\sigma_{T_1}}\right)} \\ & =1 - \frac{\mathrm{Q}\left(\frac{1}{c_\mathrm{v}(T_1)}\left[\frac{\ln(1-{\gamma})}{\ln(1-{\gamma^*(R_\mathrm{Q})})} - 1\right]\right)}{1 - \mathrm{Q}\left(\frac{1}{c_\mathrm{v}(T_1)}\right)},
\end{split}
\end{equation}
as we wanted to prove.

It can be seen that the quantum outage probability as a function of the damping parameter $\gamma$ does not depend on the absolute value of the mean relaxation time, but on the coefficient of variation of $T_1$. This way, we decouple the time-varying effects from the fact that longer mean relaxation times admit longer quantum algorithm processing times. Consequently, we present a result agnostic to the impact that longer coherence times have and we can provide conclusions for all superconducting qubits.

To finish, one needs to make sure that under the normalization done, the maximum value that $t_{\mathrm{algo}}$ can take is still much lower than the coherence time $T_\mathrm{c}$ of the random process $T_1(t,\omega)$, which is of order of minutes \cite{decoherenceBenchmarking}. Considering algorithm times longer than the mean relaxation time makes no sense since for such timeframe the qubit is in equilibrium state with high probability, and, therefore, it is useless as a resource. Additionally, the value of the mean relaxation time is in the order of microseconds for superconducting qubits \cite{TVQC}, and so taking $t_{\mathrm{algo}}^{\mathrm{max}}=\mu_{T_1}$ for our theorem makes sense. Such algorithm time is associated to the value $\gamma = 1- e^{-1}$, so the quantum outage probability defined here is valid for the range $\gamma\in[0,1-e^{-1}]$.
\end{proof}

\subsection{Quantum hashing outage probability for the time-varying twirl approximated channels}\label{sub:TVtwirls}
Because the analytical expression for the LSD capacity of Pauli channels is not known, the quantum outage probability for this family of approximated channels cannot be calculated. Nevertheless, by means of the hashing bound, we define the quantum hashing outage probability for Pauli channels as
\begin{equation}\label{eq:poutPauli}
p_{\mathrm{out}}^\mathrm{H}(R_\mathrm{Q}) = \mathrm{P}(\{\omega\in\Omega : C_\mathrm{H}(\omega)< R_\mathrm{Q}\}).
\end{equation}

Note that the quantum hashing outage probability will be an upper bound on the actual quantum outage probability of time-varying Pauli channels, since the hashing limit is a lower bound of the LSD capacity. This way, events that exceed the hashing bound will be more likely than the events that exceed the LSD capacity, which means that $p_{\mathrm{out}}^\mathrm{H}(R_\mathrm{Q}) \geq p_{\mathrm{out}}^\mathrm{Q}(R_\mathrm{Q})$. Consequently, $p_{\mathrm{out}}^\mathrm{H}(R_\mathrm{Q})$ is an upper bound of interest for benchmarking the behaviour of the TV Pauli channels.

It is important to realise that the Hashing bound \eqref{eq:hash}
\begin{equation}
C_\mathrm{H}(\mathbf{p}(\gamma)) =1+\sum_{k\in\{\mathrm{I,x,y,z}\}}p_k(\gamma)\log_2p_k(\gamma) = 1- \mathrm{H}_2(\mathbf{p}(\gamma))
\end{equation}
 for the twirled approximated channels of the AD channel with the probability distributions given in \eqref{josu9} and \eqref{josu10}, is a monotonic decreasing function of the damping probability $\gamma$. This is justified by the fact that, as $\gamma\in[0,1]$ increases, the values of $p_\mathrm{x},p_\mathrm{y},p_\mathrm{z}$ in either \eqref{josu9} or \eqref{josu10} also increase. This results in the uncertainty of the discrete random variables associated to each of these distributions, and consequently their corresponding entropy values, becoming higher. Therefore, as for the AD channel, we define the noise limit for these Pauli channels as the unique value of the damping parameter, $\gamma_{\mathrm{T}}^*(R_\mathrm{Q})$, such that:
\begin{equation}\label{josu11}
1-C_\mathrm{H}(\mathbf{p}(\gamma^*_{\mathrm{T}}(R_\mathrm{Q})) =R_\mathrm{Q}\Leftrightarrow \gamma^*_{\mathrm{T}}(R_\mathrm{Q})=C^{-1}_\mathrm{H}(1-R_\mathrm{Q}).
\end{equation}
From \eqref{josu4}, the critical relaxation time (note that we have added the subindex $\mathrm{T}$ to indicate we are twirling the AD channel) is
\begin{equation}\label{josu122}
T^*_{1,\mathrm{T}}(R_\mathrm{Q},t_{\mathrm{algo}})=\frac{-t_{\mathrm{algo}}}{\ln(1-\gamma^*_{\mathrm{T}}(R_\mathrm{Q}))},
\end{equation}
 where the probability mass function $\mathbf{p}$ in \eqref{josu11} should be taken as $\mathbf{p}_{\mbox{\tiny{ADPTA}}}$ or $\mathbf{p}_{\mbox{\tiny{ADCTA}}}$ when considering the twirled ADPTA or ADCTA channels, respectively. Similarly to the TVAD channel, we can write the critical relaxation time as a function of the damping parameter
 \begin{equation}\label{josu12T}
T^*_{1,\mathrm{T}}(R_\mathrm{Q},\gamma)=\frac{\mu_{T_1}\ln(1-{\gamma})}{\ln(1-{\gamma^*_{\mathrm{T}}(R_\mathrm{Q})})}.
\end{equation}

The following corollary yields the hashing outage probability of the twirled TVADPTA and TVADCTA Pauli channels for the qubits studied in \cite{decoherenceBenchmarking,TVQC}.

\begin{corollary}[Quantum hashing outage probability]\label{cor:pouHPauli}
\textit{The quantum hashing outage probability for the time-varying twirled approximated channels associated to the damping parameter,  $\gamma\in[0,1-e^{-1}]$, is equal to
\begin{equation}\label{eq:poutTVADTAs}
p_{\mathrm{out}}^\mathrm{H}(R_\mathrm{Q}, \gamma) = 1 - \frac{\mathrm{Q}\left(\frac{1}{c_\mathrm{v}(T_1)}\left[\frac{\ln(1-{\gamma})}{\ln(1-{\gamma^*_{\mathrm{T}}(R_\mathrm{Q})})} - 1\right]\right)}{1 - \mathrm{Q}\left(\frac{1}{c_\mathrm{v}(T_1)}\right)},
\end{equation}
where $\mathrm{Q}(\cdot)$ is the Q-function, $c_\mathrm{v}(T_1)$ is the coefficient of variation of $T_1$ given in \eqref{eq:CVar}, $\gamma^*_{\mathrm{T}}(R_\mathrm{Q})$ is the noise limit that depends on the considered twirled approximation, $\mu_{T_1}$ is the mean relaxation time, and $\sigma_{T_1}$ is the standard deviation of the relaxation time.}
\end{corollary}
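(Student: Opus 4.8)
\emph{Proof proposal.} The plan is to mirror, step for step, the proof of Theorem~\ref{thm:poutTVAD}, replacing the LSD capacity $C_\mathrm{Q}$ by the hashing bound $C_\mathrm{H}(\mathbf{p}(\gamma))$ and the noise limit $\gamma^*(R_\mathrm{Q})$ by $\gamma^*_\mathrm{T}(R_\mathrm{Q})$ of \eqref{josu11}. The one structural ingredient that must be secured first is a monotonicity statement: I would show that $C_\mathrm{H}(\mathbf{p}_{\mbox{\tiny{ADPTA}}}(\gamma))$ and $C_\mathrm{H}(\mathbf{p}_{\mbox{\tiny{ADCTA}}}(\gamma))$ are strictly decreasing on $[0,1-e^{-1}]$. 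For the ADCTA (depolarizing) case this is immediate, since $p_\mathrm{I}(\gamma)=\left(\frac{1+\sqrt{1-\gamma}}{2}\right)^{2}$ decreases in $\gamma$ and the leaked mass $1-p_\mathrm{I}$ is split equally among $p_\mathrm{x},p_\mathrm{y},p_\mathrm{z}$, so $\mathrm{H}_2(\mathbf{p})$ increases; for the ADPTA case the forms \eqref{josu9} show $p_\mathrm{x}=p_\mathrm{y}=\gamma/4$ and $p_\mathrm{z}=\left(\frac{1-\sqrt{1-\gamma}}{2}\right)^{2}$ all increase while $p_\mathrm{I}$ decreases, and Schur-concavity of the entropy again yields monotonicity of $\mathrm{H}_2(\mathbf{p})$. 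Combined with the strictly decreasing dependence of $\gamma(t,\omega)=1-e^{-t_{\mathrm{algo}}/T_1(t,\omega)}$ on $T_1$ from \eqref{josu4}, this guarantees that \eqref{josu11} has a unique root $\gamma^*_\mathrm{T}(R_\mathrm{Q})$ and that the three events $\{\omega:C_\mathrm{H}(\omega)<R_\mathrm{Q}\}$, $\{\omega:\gamma(\omega)>\gamma^*_\mathrm{T}(R_\mathrm{Q})\}$ and $\{\omega:T_1(\omega)<T^*_{1,\mathrm{T}}(R_\mathrm{Q},\gamma)\}$ coincide, exactly as in \eqref{eq:TVADstep1}, with $T^*_{1,\mathrm{T}}$ given by \eqref{josu122}.

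Second, I would evaluate $p_{\mathrm{out}}^\mathrm{H}(R_\mathrm{Q},\gamma)=\mathrm{P}(\{\omega:T_1(\omega)<T^*_{1,\mathrm{T}}(R_\mathrm{Q},\gamma)\})$ by integrating the truncated-Gaussian density \eqref{eq:pdf} over $[0,T^*_{1,\mathrm{T}}]$. This is the same elementary computation carried out in \eqref{eq:TVADstep3}: factor out the normalization $1/(1-\mathrm{Q}(\mu_{T_1}/\sigma_{T_1}))$, write $\int_0^{T^*_{1,\mathrm{T}}}$ as the difference of two integrals over $(-\infty,\cdot\,]$, apply the substitution $\eta=(t_1-\mu_{T_1})/\sigma_{T_1}$, and rewrite the Gaussian tails with the Q-function. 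The outcome is
\[
p_{\mathrm{out}}^\mathrm{H}(R_\mathrm{Q},\gamma)=1-\frac{\mathrm{Q}\!\left(\frac{T^*_{1,\mathrm{T}}(R_\mathrm{Q},\gamma)-\mu_{T_1}}{\sigma_{T_1}}\right)}{1-\mathrm{Q}\!\left(\frac{\mu_{T_1}}{\sigma_{T_1}}\right)},
\]
and substituting the expression \eqref{josu12T} for $T^*_{1,\mathrm{T}}(R_\mathrm{Q},\gamma)$ together with the definition $c_\mathrm{v}(T_1)=\sigma_{T_1}/\mu_{T_1}$ from \eqref{eq:CVar} then produces \eqref{eq:poutTVADTAs}.

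Third, I would close with the same range justification used for Theorem~\ref{thm:poutTVAD}: the normalization \eqref{eq:talgo} gives $t_{\mathrm{algo}}=-\mu_{T_1}\ln(1-\gamma)$, so imposing $t_{\mathrm{algo}}\le\mu_{T_1}$ (algorithm times longer than the mean relaxation time are operationally useless, and this bound still keeps $t_{\mathrm{algo}}\ll T_\mathrm{c}$) restricts the damping parameter to $\gamma\in[0,1-e^{-1}]$, the stated validity interval.

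I do not expect a genuine obstacle, since the body of the argument is a transcription of the proof of Theorem~\ref{thm:poutTVAD}. The only part requiring real (if routine) care is the monotonicity claim for $C_\mathrm{H}(\mathbf{p}(\gamma))$ in the first step, as it is what licenses collapsing the capacity-outage event to a threshold event on $T_1$; the discussion preceding the corollary already sketches this, and in the proof I would make it rigorous --- e.g.\ by invoking Schur-concavity of $\mathrm{H}_2$ together with the fact that increasing $\gamma$ strictly transfers probability mass out of the identity Kraus component (concretely, the ordered partial sums of $\mathbf{p}_{\mbox{\tiny{ADPTA}}}(\gamma)$ are themselves monotone in $\gamma$) --- and I would note as a degenerate corner case that if $R_\mathrm{Q}$ is so small that $\gamma^*_\mathrm{T}(R_\mathrm{Q})\ge 1-e^{-1}$ the channel is never in outage over the admissible range, consistent with \eqref{eq:poutTVADTAs}.
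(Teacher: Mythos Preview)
Your proposal is correct and follows essentially the same route as the paper: reduce the hashing-outage event to a threshold event on $T_1$ via the monotonicity of $C_\mathrm{H}(\mathbf{p}(\gamma))$ in $\gamma$ and of $\gamma$ in $T_1$, and then invoke the truncated-Gaussian computation already carried out in the proof of Theorem~\ref{thm:poutTVAD}. The paper's own proof is terser---it simply asserts the monotonicity (relying on the informal argument in the paragraph preceding the corollary) and then points to \eqref{eq:TVADstep3}---whereas you supply a concrete mechanism (Schur-concavity of the entropy together with the explicit mass transfer out of $p_\mathrm{I}$) and restate the range justification; this extra rigor is welcome but does not constitute a different approach.
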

\begin{proof}
In order to compute the hashing outage probability $p_{\mathrm{out}}^\mathrm{H}(R_\mathrm{Q}, \gamma)$, we use the fact of the decreasing monotonicity of $C_\mathrm{H}$ and $T_1$ with respect to $\gamma$. This implies that the events $\{\omega\in\Omega : C_\mathrm{H}(\omega)< R_\mathrm{Q}\}$, $\{ \omega\in\Omega : \gamma(\omega)< \gamma^*_{\mathrm{T}}(R_\mathrm{Q})\}$ and $\{\omega\in\Omega : T_1(\omega) < T_{1,\mathrm{T}}^*(R_\mathrm{Q}, \gamma)\}$ are all same.
Therefore,
\begin{equation}\label{eq:TVADTAstep2}
\begin{split}
p_{\mathrm{out}}^\mathrm{H}(R_\mathrm{Q}, \gamma) &= P(\{\omega\in\Omega : C_\mathrm{H}(\omega) < R_\mathrm{Q}\})= \\&  \mathrm{P}(\{\omega\in\Omega : T_1(\omega) < T_{1,\mathrm{T}}^*(R_\mathrm{Q}, \gamma)\}
\end{split}
\end{equation}
Thus, the hashing outage corresponds to events where the realization of the relaxation time is lower than the critical relaxation time.

From this point, the calculation of the quantum hashing outage probability is the same as in the proof of Theorem \ref{thm:poutTVAD}, since equation \eqref{eq:TVADTAstep2} is the same as \eqref{eq:TVADstep1}. 
\end{proof}

Note that even though the final expression is the same as the one of Theorem \ref{thm:poutTVAD}, the noise limit value is calculated in a different manner, which means that the results are different for each of the TV channels.

\subsection{Numerical simulations}\label{sub:numsim}

By using the results derived previously, we now discuss the behaviour of the quantum outage probability and the quantum hashing outage probability. Following the reasoning of \cite{TVQC}, we will compare scenarios described by different coefficients of variation of the random variable $T_1$. For the analysis conducted in this section, we consider the following values of the coefficient of variation; $c_\mathrm{v}(T_1)=\{1,10,15,20,25\}\%$.

\subsubsection{Quantum outage probability of the TVAD channel}\label{subsub:tvadPoutQ}

Figure \ref{fig:ADpoutQ} plots the quantum outage probability versus the damping parameter, $10^{-3}\leq \gamma\leq 0.6$, for a quantum rate $R_\mathrm{Q}=1/9$, and for all the coefficients of variation of the relaxation time random variable.

\begin{figure}[!ht]
\centering
\includegraphics[width=\linewidth]{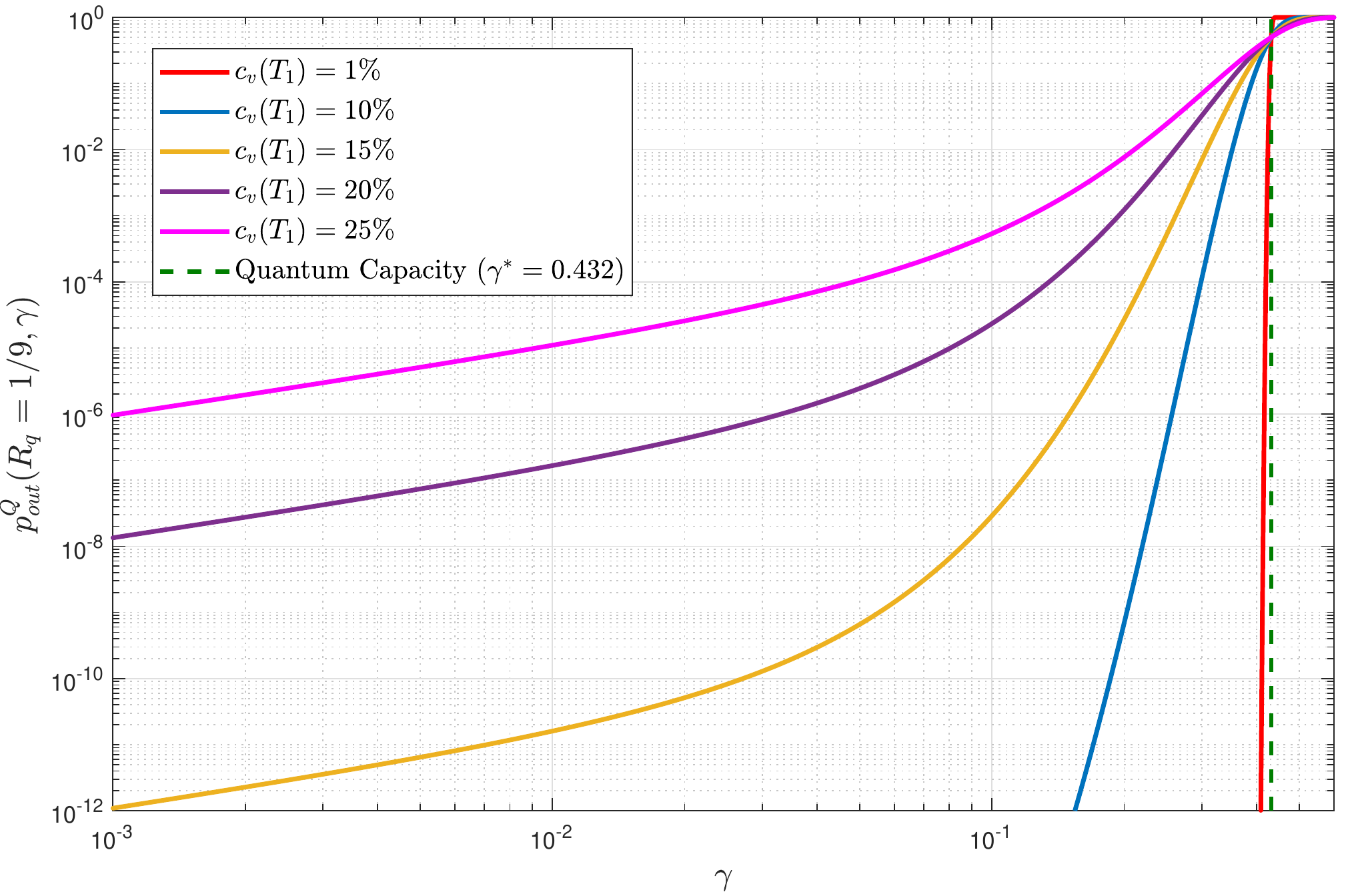}
\caption{\textbf{Quantum outage probability of the TVAD channel.} The metric is calculated for TVADs with $c_\mathrm{v}(T_1)=\{1,10,15,20,25\}\%$ and for a quantum rate of $R_\mathrm{Q}=1/9$.}
\label{fig:ADpoutQ}
\end{figure}

Figure \ref{fig:ADpoutQ} further cements the conclusions derived in \cite{TVQC}, as it shows that the impact of the fluctuations of the decoherence parameters can be accurately quantified by the coefficient of variation of $T_1$ (note that in \cite{TVQC} this was true if the waterfall region was steep enough). Notice that when the coefficient of variation is very low, i.e. $c_\mathrm{v}(T_1)=1\%$, the quantum outage probability of the TVAD channel almost coincides with the quantum capacity (represented herein by the noise limit, $\gamma^*$). Consequently, QECCs operating over TVQCs that present low coefficients of variation will behave asymptotically in a similar manner to static channels. Nevertheless, increasing the variability of the relaxation time around the mean causes the outage probability to diverge from the static capacity. In this case, the asymptotical bounds flatten and the achievable error rate of QECCs operating over TVQCs does not vanish. Therefore, the higher $c_\mathrm{v}(T_1)$ is, the worse the achievable error rate will be.

The previous discussion indicates that the coefficient of variability of the random variable $T_1(\omega)$ can be used to describe the effect that decoherence parameter time fluctuations will produce on the aymptotical limits of QECCs. These results also speak towards the importance of qubit construction and cooldown: if optimized correctly, the fluctuations relative to the mean will be mild, and the outage scenarios will be significantly less frequent. Naturally, it is desirable for qubits to exhibit long mean coherence times $T_1$ so that algorithms with longer lifespans can be handled appropriately. However, aside from seeking to increase the coherence time of qubits, it is clear that minimizing the dispersion of this parameter will be critical if these qubits are to be reliable \cite{TVQC}.

\begin{figure}[!ht]
\centering
\includegraphics[width=\linewidth]{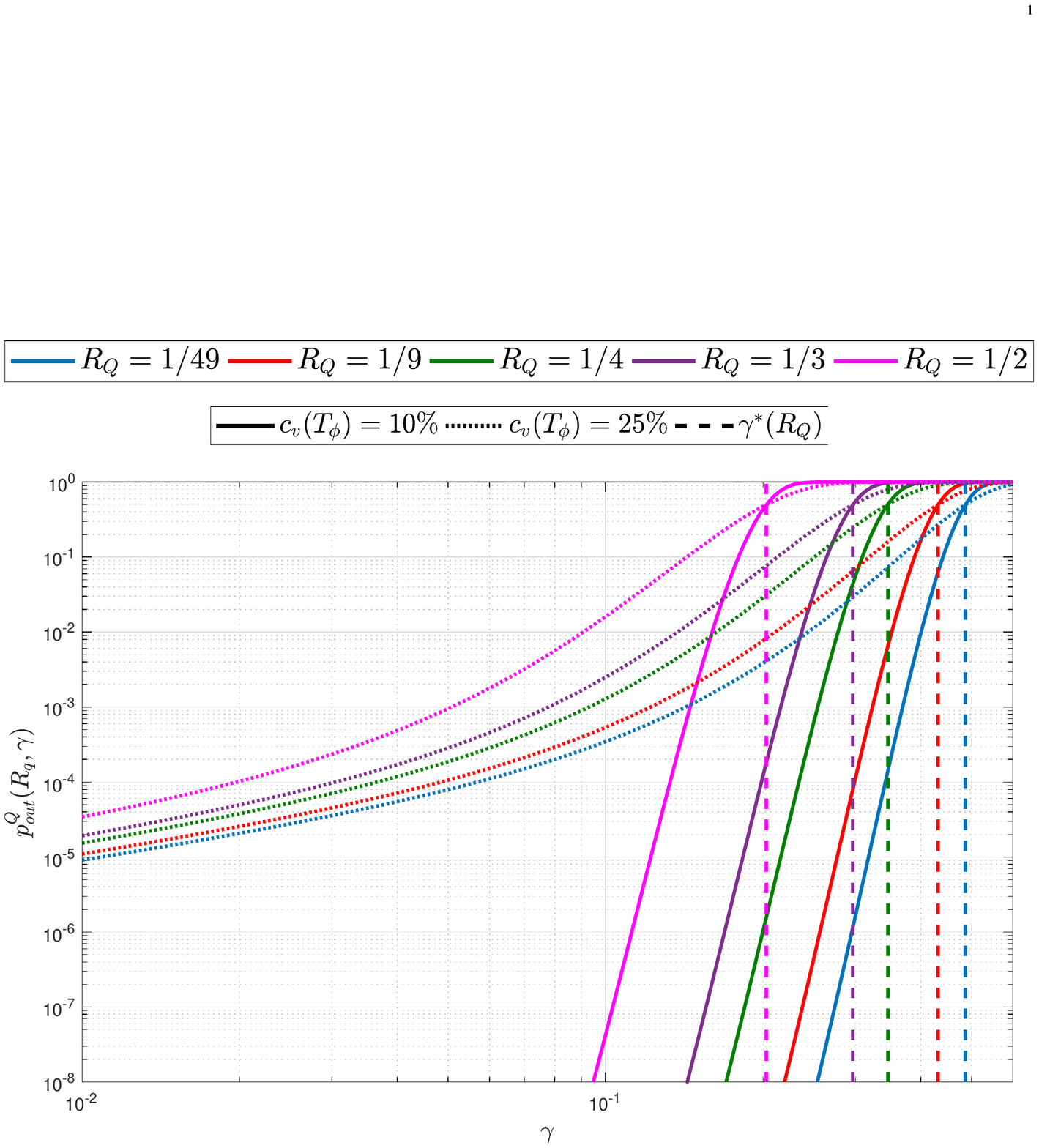}
\caption{\textbf{Quantum outage probability of the TVAD channel for different quantum rates.} The considered quantum rates are $R_\mathrm{Q}\in\{1/49,1/9,1/4,1/3,1/2\}$. We plot the quantum outage probability for $c_\mathrm{v}(T_1) =\{10,25\}\%$. The quantum capacities (noise limits, $\gamma^*$) for the static quantum channels are also represented.}
\label{fig:ADpoutRates}
\end{figure}

Let us now discuss how the quantum rate affects the quantum outage probability. Once more, Figure \ref{fig:ADpoutRates} shows the quantum outage probability for $c_\mathrm{v}(T_1) = \{10,25\}\%$, but it now considers different quantum rates $R_\mathrm{Q}\in\{1/49,1/9,1/4,1/3,1/2\}$. As expected, the results portrayed in this figure show that increasing the rate leads to an increase of $p_{\mathrm{out}}^\mathrm{Q}(R_\mathrm{Q},\gamma)$, although the shape of the curves remains similar to scenarios with the same coefficient of variation. The main takeaway is that, although increasing the rate of an error correction code reduces the overall resource consumption, this occurs at the expense of a degradation in the asymptotical error correction performance. It is important to mention that this degradation does not occur because there is higher sensitivity to time fluctuations at higher rates. Further inspection of figure \ref{fig:ADpoutRates} reveals that the noise limits for each rate change as expected, and that the outage probabilities behave similarly according to those noise limits. Thus, similarly to classical coding, the quantum rate does indeed impact the quantum outage probability, but not due to a higher sensitvity to time-variance. Furthermore, similar to what happens in static channels, there is trade-off between resource consumption and how demanding (in terms of noise) the quantum channel is.

\subsubsection{Quantum hashing outage probability of the time-varying twirl approximated channels}\label{subsub:PauliPoutH}
We continue by comparing the outage of the TVAD channel and its twirled approximated channels. Figure \ref{fig:comparison} plots the hashing outage probability results of the TVADPTA and TVADCTA channels. Note that the x-axis is still $\gamma$, despite the fact that the defining parameter for the TVADPTA and TVADCTA channels is $\mathbf{p}$. However, the $\gamma$ associated to a given $p$ can be obtained easily \cite{josuchannels}, which is necessary to perform comparisons with the TVAD channel. The quantum outage capacities of the TVAD channel from figure \ref{fig:ADpoutQ} are also shown. Note that the hashing outage probabilities for the ADPTA and ADCTA channels are worse than the quantum outage probability of the TVAD channel, since the noise limits for those channels are lower than the one for the TVAD. This means that the hashing outage probabilities of the twirled channels are worse because their noise limits are worse, and not because they are more sensitive to time-fluctuations. This is analogous to the previous explanation on the difference between the values of $p_{\mathrm{out}}^\mathrm{Q}$ for QECCs with different quantum rates operating over the TVAD channel. Also note that even though the hashing outage probabilities for these approximated channels are higher than the quantum outage probability for the TVAD, one can not conclude that the actual quantum outage probability for these twirled channels will be worse than the one for the AD channel (recall that the hashing outage probability provides an upper bound on the actual outage probability).

\begin{figure}[!h]
\centering
\includegraphics[width=\linewidth]{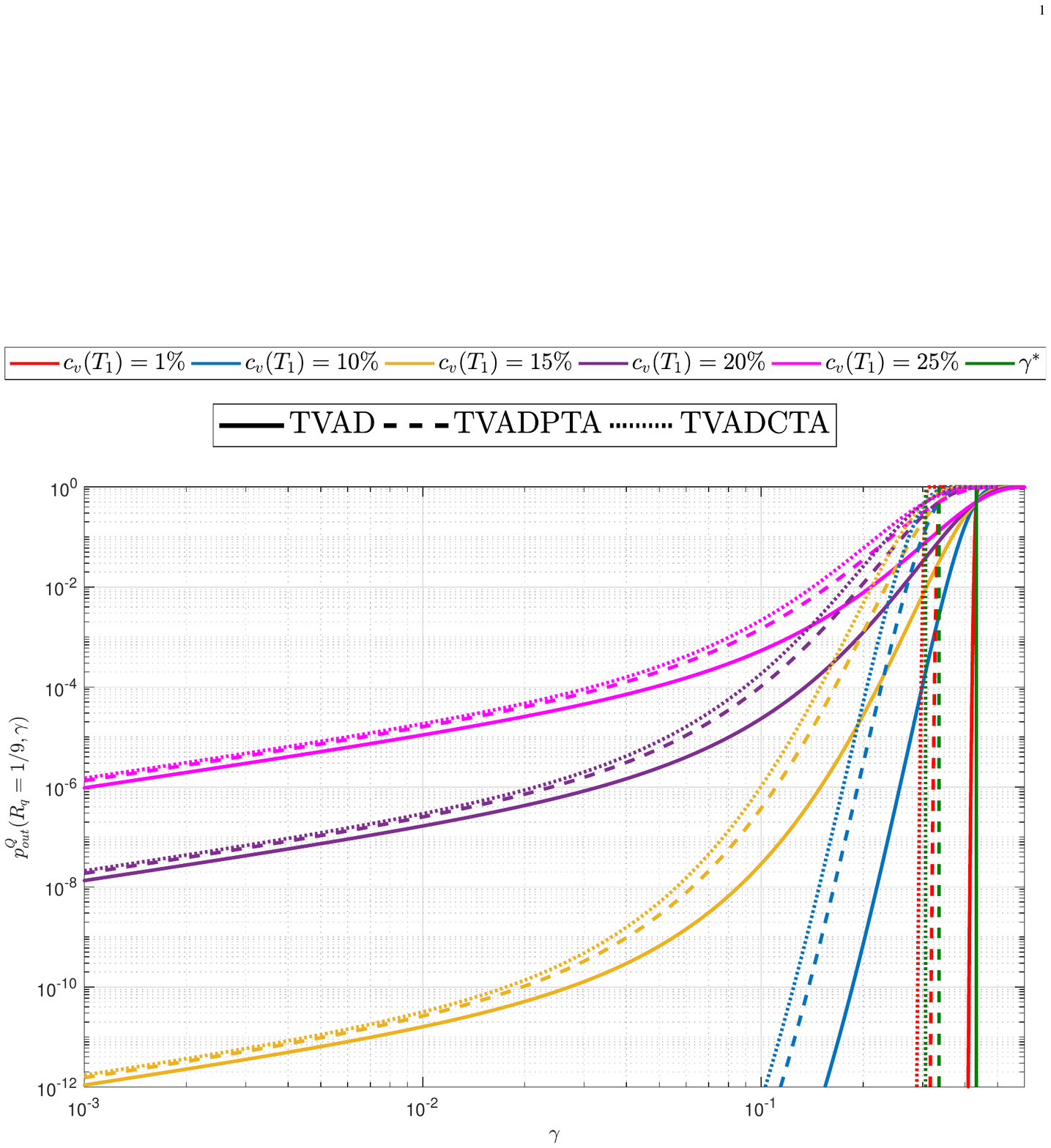}
\caption{\textbf{Quantum outage and hashing outage probabilities for TVAD, TVADPTA and TVADCTA when $\mathbf{R_\text{\textbf{Q}}=1/9}$.} The noise limits are: $\gamma^*_{\mathrm{AD}} = 0.432$, $\gamma^*_{\mathrm{ADPTA}} = 0.3354$ and $\gamma^*_{\mathrm{ADCTA}}=0.3065$.}
\label{fig:comparison}
\end{figure}

\section{QECCs operating over TVQCs and quantum outage probability}\label{sec:QECCs}
To finish our discussion on the quantum outage probability, we study the performance of a quantum turbo code of rate $R_Q=1/9$ \cite{josu3} when operating under TVQC models of decoherence \cite{TVQC} and we use the results obtained in the previous section for the quantum outage probability to benchmark its performance. Since the depolarizing channel model is the most popular error model when it comes to studying the performance of QECC families in the literature \cite{josuchannels}, we will also follow this trend herein. We consider the ADCTA depolarizing channel as the static channel of interest and the TVADCTA as its time-varying version. Considering that these decoherence models belong to the family of Pauli channels, the information theoretical benchmarks that will be considered are the hashing bound and the quantum hashing outage probability. The time-varying channels that we consider are the ones associated to the QA\_C5 ($c_v(T_1)\approx 26\%$) and QA\_C6 ($c_v(T_1)\approx 22\%$) superconducting qubit scenarios of \cite{decoherenceBenchmarking}. We select these scenarios in order to portray the performance that error correction codes would exhibit when operating on real hardware that exhibits time fluctuations \cite{decoherenceBenchmarking,TVQC}.

Monte Carlo computer simulations have been carried out to estimate the performance for the different scenarios presented in the paper. Each round (i.e., transmitted block) of the numerical simulation is performed by generating an $N$-qubit Pauli operator, calculating its associated syndrome, and finally running the decoding algorithm. Once the logical error is estimated, it is compared with the logical error associated to the physical channel error in order to decide if the decoding round was successful. The operational figure of merit used to evaluate the performance of these quantum error correction schemes is the Word Error Rate ($\mathrm{WER}$), which is the probability that at least one qubit of the received block is incorrectly decoded.

The number of transmitted blocks $N_{\mathrm{blocks}}$ needed to empirically estimate $\mathrm{WER}$ (by Monte Carlo simulation), is given by the following rule of thumb \cite{MonteCarlo}:
\begin{equation}
N_{\mathrm{blocks}} = \frac{100}{\mathrm{WER}}.
\end{equation}
Under the assumption that the observed error events are independent, the above number of blocks will guarantee that the unknown value of $\mathrm{WER}$ will be inside the confident interval $(0.8\hat{\mathrm{WER}} , 1.25\hat{\mathrm{WER}})$ with probability 0.95, where $\hat{\mathrm{WER}}$ refers to the empirically estimated value of $\mathrm{WER}$ based on $N_{\mathrm{blocks}}$.

Figure \ref{fig:QTCTV} shows the performance curves for the quantum turbo codes studied in \cite{josu3,TVQC} operating over the static and TV channels. Here we use $p$ instead of $\gamma$, since the depolarizing channel is considered (calculating the depolarizing probability of a CTA associated to a value of $\gamma$ is trivial). This code has rate $R_\mathrm{Q}=1/9$, a block length of $1000$ qubits, and is decoded using the turbo decoding algorithm presented in \cite{QTC,EAQTC}, which combines two Soft-In Soft-Out (SISO) decoders. From this figure, it can be observed that the performance degradation for the QTC when the channel is the TVADCTA rather than the static ADCTA begins at the waterfall region and becomes more prominent as the depolarizing probability decreases. For example, for a depolarizing probability $p\approx 0.12$, the Word Error Rate, $\mathrm{WER}$, of the QTC operating over the static channel is within the range of $\mathrm{WER} \approx 10^{-2}$, but for that same $p$, it increases an order of magnitude to $\mathrm{WER} \approx 10^{-1}$ when operating over the TV channel scenario QA\_C6. This $\mathrm{WER}$ deviation increases almost four orders of magnitude when the depolarizing probability decreases to $p\approx 0.1$ for the same superconducting qubit scenarios. Thus, as concluded in \cite{TVQC}, the fluctuations of the relaxation time of the superconducting qubits substantially worsen the error correcting capabilites of the QECCs.

\begin{figure}[!h]
\centering
\includegraphics[width=\linewidth]{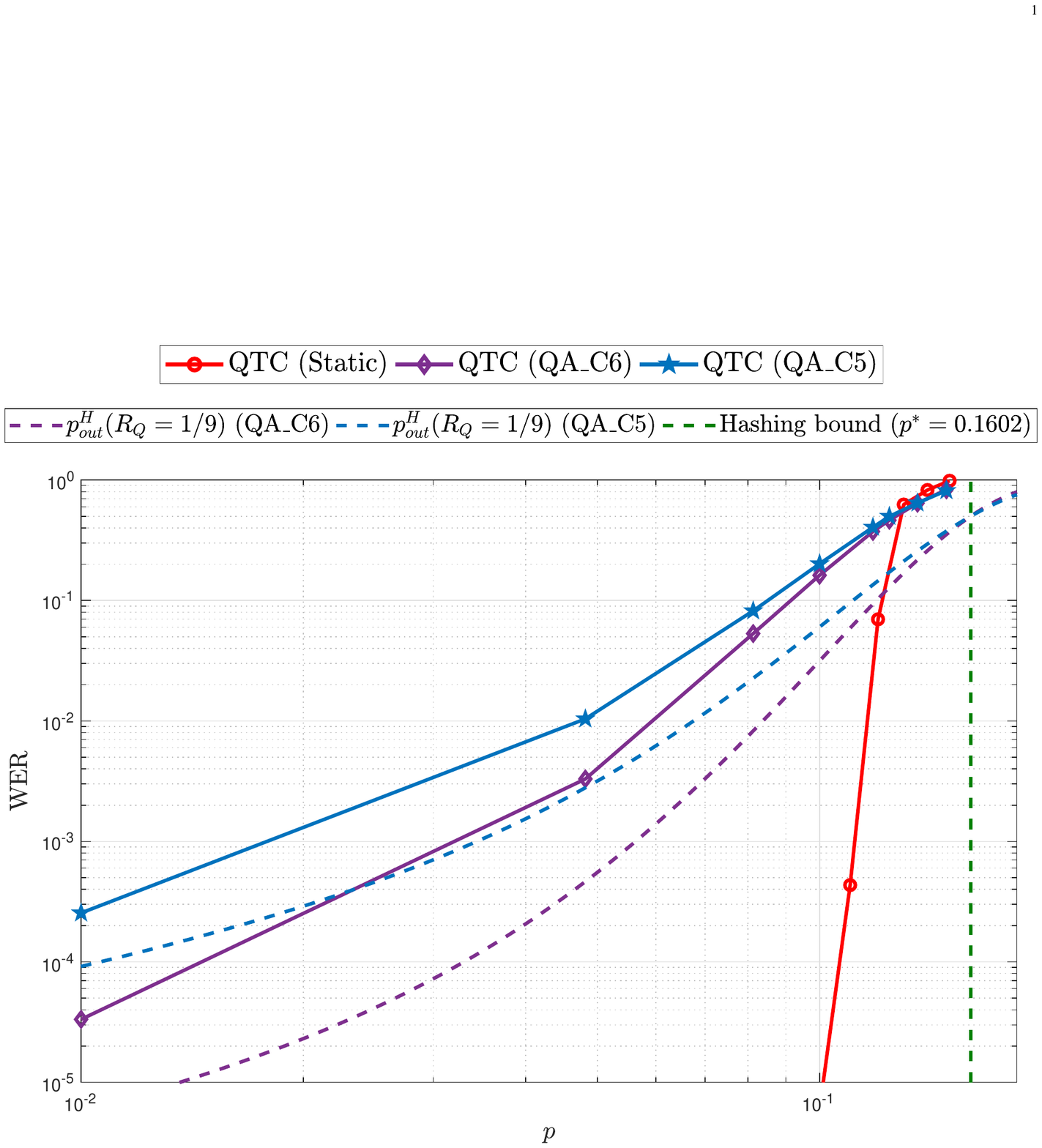}
\caption{\textbf{Performance of the quantum turbo code from \cite{josu3,TVQC}.} The quantum turbo code operates over the static ADCTA channel and the TVADCTA channel for scenarios QA\_C5 and QA\_C6 \cite{decoherenceBenchmarking,TVQC}. The quantum error correction code has rate $R_\mathrm{Q}=1/9$ and encodes blocks of $1000$ qubits. The hashing bound and the hashing outage probabilities are also plotted.}
\label{fig:QTCTV}
\end{figure}

Figure \ref{fig:QTCTV} also shows the quantum hashing outage probability $p_{\mathrm{out}}^\mathrm{H}(R_\mathrm{Q}, p)$, derived in section \ref{sub:TVtwirls}. Note that the quantum hashing outage probability is an upper bound on the asymptotically achievable $\mathrm{WER}$. We know from the previous section that the coefficient of variation of the relaxation time yields insight on how flat the hashing outage probability becomes. Notice that the superconducting qubits of scenario QA\_C5 ($c_\mathrm{v}(T_1)\approx 26\%$) are more affected by time-variations than the ones of scenario QA\_C6 ($c_\mathrm{v}(T_1)\approx 22\%$).

To quantify the distance to the hashing outage bound, we use a similar metric to the one proposed in \cite{patrick3}, which measures the distance in dBs between the performance of a code and the hashing outage at a given $\mathrm{WER} = \chi$:
\begin{equation}\label{eq:distancPout}
\delta_{\mathrm{out}}(@\chi) = 10\log\left(\frac{p(p_{\mathrm{out}}^\mathrm{H}=\chi)}{p(\mathrm{WER}_{\mathrm{code}} = \chi)}\right).
\end{equation}

For example, the $1/9$-QTC is $\delta_{\mathrm{out}}(@10^{-3}) \approx 1.75 \text{ dB}$ away from the hashing outage for the QA\_C5 scenario and $\delta_{\mathrm{out}}(@10^{-3}) \approx 1.67 \text{ dB}$ away for the QA\_C6 scenario.

\section{Conclusion}\label{sec:conc}

In this paper, we have introduced the concept of quantum outage probability as the asymptotically achievable error rate for quantum error correction when time-varying quantum channels are considered. Additionally, we have also introduced the quantum hashing outage probability as an upper bound on the quantum outage probability when Pauli channels are considered, since the actual quantum capacity of these channels is not known. We have provided closed-form expressions of these probabilities for the TVAD, TVADPTA and the TVADCTA channels. We have also studied the behaviour of the $R_\mathrm{Q}=1/9$ QTC from \cite{josu3,TVQC} and benchmarked its performance using the hashing outage probability. We have concluded that the time-variations experienced by the relaxation times do affect the performance of QECCs in a significant manner when the error rate curve is steep enough \cite{TVQC}, and that those TV effects should be taken into account when optimizing code construction. The information theoretical analysis presented in this work is essential to benchmark the behaviour of quantum error correction codes in TV scenarios. Similar studies for the quantum outage probability of the more general time-varying combined amplitude and phase damping channel will be considered in future work, as this will be critical in order to have a complete tableau of the information theoretical limits of error correction for superconducting qubits with pure dephasing channels. 

In summary, it is clear that the time-varying nature of the decoherence parameters will have a significant impact on the performance of future QECCs that will be used to protect quantum information. We have found, based on the results shown throughout this paper, that the quantum outage probability is a function of the coefficient of variation of $T_1$ and that it increases as $c_\mathrm{v}(T_1)$ increases. Therefore, to improve the error correction capabilities of quantum codes in TVQCs it is important to experimentally look for qubits that, not only have a large mean relaxation time, but that also exhibit a low standard deviation relaxation time. In this way, the error correction potential of QECCs under time varying conditions will approach those found for static channels.

\section*{Acknowledgements}
This work was supported by the Spanish Ministry of Economy and Competitiveness through the ADELE project (PID2019-104958RB-C44). This work has been funded in part by NSF Award CCF-2007689. Josu Etxezarreta is funded by the Basque Government predoctoral research grant.

\section*{Author Contributions}
J.E.M. and J.G.-F. conceived the research. J.E.M. derived the closed-form expressions of the Theorem and the Corollary. J.E.M. and P.F. performed the numerical simulations. J.E.M. and P.F. analyzed the results and drew the conclusions. The manuscript was written by J.E.M. and P.F., and revised by P.M.C. and J.G.-F. The project was supervised by P.M.C. and J.G.-F.

\end{document}